\documentclass[aps, preprint]{revtex4-1}

\usepackage{graphicx}%
\usepackage{dcolumn}%
\usepackage{bm}%

\usepackage{amsmath}
\usepackage{amssymb}
\usepackage{amsthm}

\newtheorem{prop}{Proposition}[section]

\begin{document}
\title{Bianchi Class B Spacetimes with Electromagnetic Fields}
\date{\today}
\author{Kei Yamamoto}
\email{K.Yamamoto@damtp.cam.ac.uk}
\affiliation{DAMTP, Centre for Mathematical Sciences, University of Cambridge, \\
Wilberforce Road, CB3 0WA, United Kingdom}

\begin{abstract}
We carry out a thorough analysis on a class of cosmological space-times which admit three spacelike Killing vectors of Bianchi class B and contain electromagnetic fields. Using dynamical system analysis, we show that a family of electro-vacuum plane-wave solutions of the Einstein-Maxwell equations is the stable attractor for expanding universes. Phase dynamics are investigated in detail for particular symmetric models. We integrate the system exactly for some special cases to confirm the qualitative features. Some of the obtained solutions have not been presented previously to the best of our knowledge. Finally, based on those analyses, we discuss the relation between those homogeneous models and perturbations of open Friedmann-Lema\^itre-Robertson-Walker universes. We argue that the electro-vacuum plane-wave modes correspond to a certain long-wavelength limit of electromagnetic perturbations.
\end{abstract}

\pacs{04.20.Jb, 04.40.Nr, 98.80.Jk }

\maketitle

\section{Introduction}
Cosmological magnetic fields have been studied for decades. Various upper bounds have been placed on the strength of any primordial galactic and extragalactic magnetic field by Faraday rotation measure \cite{Kronberg,Wolfe}, CMB anisotropy \cite{Silk,Barrow,Maartens} and primordial nucleosynthesis \cite{Kernan,Giovannini} although they are inconclusive regarding its existence. Recently, a lower bound of $B \sim 10^{-16}$G has been claimed by using TeV gamma-ray sources \cite{Neronov,Ando}. Given the increasing quantity and accuracy of the data from astronomical observations, it is important to examine various theoretical possibilities concerning the large-scale electromagnetism in curved space-time even if the effect is likely to be small.

It is well known that the energy density of electromagnetic field decays as the inverse fourth-power of the scale factor of the universe when it is perturbatively analyzed around a flat Friedmann-Lema\^itre-Robertson-Walker (FLRW) background due to the conformal invariance of the Maxwell equations.  Recently, however, it has been pointed out that in open FLRW universes, the decay of magnetic energy density can be slower than that of blackbody radiation if you take into account modes with wavelengths above a certain threshold \cite{Tsagas}. The source of amplification is the interaction between the scalar curvature of the three-spaces and the vector-mode perturbation of  electromagnetic field, which would be a second-order effect in the perturbation around flat FLRW. Based on that result, it is interesting to look into nonlinear effects of the magneto-curvature coupling on large scales.

Spatially homogeneous Bianchi cosmologies are a suitable framework to generalize FLRW cosmology in the limit of small long-wavelength inhomogeneities. While the model is tractable as a system of ordinary differential equations, it is fully nonlinear and complicated enough to exhibit qualitatively new behaviour. It is known \cite{Hughston} that only types I, II, VI$_0$ and VII$_0$ admit pure magnetic fields among all the different models of spatially homogeneous universes and they have been investigated by many authors \cite{Collins, LeBlancVI, LeBlancI, LeBlancII}. Although the other models have been given less attention in the context of magnetic fields because of the lack of observational evidence for large-scale electric fields, we need to investigate them for the nonlinear effects described above because only type V and VII$_h$ contain open FLRW models as special cases. Indeed, it is this mixing of electric and magnetic mode that gives rise to the amplification in the perturbative analysis. We expect, at least intuitively, that the isotropic limit of those homogeneous universes should reproduce the behaviour of a long-wavelength limit of perturbations around open FLRW. These models belong to a subclass of homogeneous cosmological models, the so-called class B of Bianchi cosmologies which is of interest from a mathematical point of view as well because of the existence of electro-vacuum plane-wave solutions that are generalization of vacuum gravitational plane-waves. In the pure-gravitational class B models, the vacuum plane-wave space-times are found to be stable attractor solutions for expanding initial conditions \cite{ClassB}. When a tilted perfect fluid, whose peculiar velocity is not aligned to the unit normal of the homogeneous spatial slices, is included, it has been shown that they are not necessarily simple attractor solutions and can exhibit non-self-similar looplike behaviour at late time \cite{TiltedB}. Since the inclusion of the electromagnetic field introduces energy fluxes with respect to the homogenous hypersurfaces (as does that of titled fluid), it is natural to ask whether these plane-wave space-times are stable in the Einstein-Maxwell system.

In the present paper, we carry out a thorough analysis of the nonexceptional Bianchi class B cosmological models containing a general electromagnetic field and an orthogonal (nontilted) perfect fluid with a linear equation of state (we refer to it as a $\gamma $-law fluid). First, in section II, we reduce the governing equations to a standard form which is suited to analyze the stability of various self-similar solutions. In section III, the dynamical system analysis suggests that the 2-parameter family of electro-vacuum plane-wave solutions is a stable attractor and possesses features very similar to its pure-gravitational counterpart. In section IV, we take a closer look on the axisymmetric subcases which still contain open FLRW models as a special case. In section V, we integrate the equations analytically in some special cases to complement the qualitative analysis and to facilitate comparison with the perturbative analysis later. Some of the solutions have not been found before to the best of our knowledge. In section VI, we look for a connection between these nonlinear homogeneous models and the large-scale limit of linear perturbations and argue that the Bianchi models correspond to a certain long-wavelength limit of vector perturbations around an open FLRW universe.

\section{Maxwell equations and the dynamical system}\label{sec:ortho}
The notations and terminology are mostly employed from Wainwright and Ellis \cite{Ellis} (chapter 1 for covariant approach to Bianchi cosmologies and chapter 4 for application of dynamical systems analysis to cosmology).  We adopt the convention where four-vectors are presented in bold face and 1-forms come with over-bars. Otherwise all the quantities are understood to be real numbers. Latin indices run from 0 to 3 and Greek letters are used to label the spatial part (1 to 3) of them. Differentiation by proper (clock-) time $t$ is denoted by an overdot. 

 Following Ellis and MacCallum \cite{MacCallum}, we take a group invariant orthonormal frame $\{ \bold{e}_a \}$ and their dual 1-forms $\{ \bar{\omega }^a \} $. The fundamental variables describing the geometry of the space-time are the commutators among the basis vectors
\begin{equation*}
[ \bold{e}_a , \bold{e}_b ] = \gamma ^{c}_{\ ab} \bold{e}_c 
\end{equation*}
and their nonvanishing components are prametrised as
\begin{eqnarray*}
\gamma ^{\alpha }_{\ 0 \beta } &=& -H\delta ^{\alpha }_{\ \beta } - \sigma ^{\alpha }_{\ \beta } -\epsilon ^{\alpha }_{\ \beta \mu }\Omega ^{\mu } , \\
\gamma ^{\alpha }_{\ \beta \gamma } &=& \epsilon _{\beta \gamma \mu }n^{\alpha \mu } +a_{\beta }\delta ^{\alpha }_{\ \gamma } -a_{\gamma }\delta ^{\alpha }_{\ \beta } ,
\end{eqnarray*}
where $\epsilon _{\alpha \beta \gamma }$ is the three-dimensional Levi-Civita symbol and Greek indices are raised/lowered by $\delta ^{\alpha \beta }$/$\delta _{\alpha \beta }$. A proper time coordinate labeling the homogeneous hypersurfaces is defined by
\begin{equation*}
\frac{\partial }{\partial t} = \bold{e}_0 
\end{equation*}
and all the variables above can be seen as functions of $t$ since
\begin{equation*}
\bold{e}_{\alpha } (\gamma ^a_{\ bc}) = 0.
\end{equation*}

As the matter contents of the universe, we take a nontilted $\gamma $-law perfect fluid, whose energy density is denoted by $\rho $, and a source-free electromagnetic field. It is natural, though not always necessary, to demand that all the components of the electromagnetic field in this frame depend only on $t$. In defining the $1+3$ split of field strength tensor $F_{ab}$, we follow the convention of Misner, Thorne and Wheeler \cite{MTW}; that is
\begin{equation*}
\mathfrak{F} = \frac{1}{2}F_{ab}\bar{\omega }^a \wedge \bar{\omega }^b = E_{\alpha }\bar{\omega }^{\alpha } \wedge \bar{\omega }^0 + \frac{1}{2}H^{\alpha }\epsilon _{\alpha \beta \gamma }\bar{\omega }^{\beta } \wedge \bar{\omega }^{\gamma }.
\end{equation*}
The source-free Maxwell equations
\begin{equation*}
d\mathfrak{F} = d^{\ast }\mathfrak{F} =0 
\end{equation*}
can be written in terms of the components in the orthonormal frame with a help of the relation
\begin{equation*}
d \bar{\omega }^a = -\frac{1}{2} \gamma ^a_{\ bc}\bar{\omega }^b \wedge \bar{\omega }^c .
\end{equation*}
The result reads as follows:
\begin{eqnarray}
\dot{H}_{\alpha } &=& -2H H_{\alpha } + \sigma _{\alpha \beta }H^{\beta } + \epsilon _{\alpha \beta \gamma }H^{\beta } \Omega ^{\gamma } + n_{\alpha \beta }E^{\beta } + \epsilon _{\alpha \beta \gamma }a^{\beta } E^{\gamma } , \nonumber \\
\dot{E}_{\alpha } &=& -2HE_{\alpha }  + \sigma _{\alpha \beta } E^{\beta } + \epsilon _{\alpha \beta \gamma } E^{\beta }\Omega ^{\gamma } -n_{\alpha \beta }H^{\beta } -\epsilon _{\alpha \beta \gamma }a^{\beta }H^{\gamma } , \label{eq:div} \\
0 &=& a_{\alpha }E^{\alpha } \ = \ a_{\alpha }H^{\alpha } . \nonumber 
\end{eqnarray}

The isometry group $G_3$ of Bianchi class B admits an Abelian subgroup $G_2$ and we choose $\bold{e}_2$ and $\bold{e}_3$ to be tangent to the orbits of the $G_2$. Then the constraint equations in (\ref{eq:div}) imply $E_1 = H_1 =0$. In the Binachi class B space-time containing this source-free electromagnetic field and the nontilted perfect fluid, it turns out that the $G_2$ acts orthogonally transitively on its orbits unless $h = -\frac{1}{9}$. We focus our attention on this so-called nonexceptional case. Thus we can use the equations given in the section 1.6.3 of Ref. \onlinecite{Ellis}, setting $\bold{\partial }_1 = \dot{u}_ 1 = 0$. The source terms are taken to be
\begin{eqnarray*}
\mu &=& \rho + 3\pi _{+} , \\
p &=& (\gamma -1) \rho + \pi _{+} , \\
\pi _{+} &=& \frac{1}{6}(E_2^2 + E_3^2 + H_2^2 + H_3^2 ) , \\
\tilde{\pi }_{AB} &=& \left(   \begin{array}{cc}
     -\frac{1}{2}(E_2^2 + H_2^2 -E_3^2 -H_3^2 ) & - (E_2 E_3 + H_2 H_3)\\ 
    -(E_2 E_3 + H_2 H_3) & \frac{1}{2}(E_2^2 + H_2^2 -E_3^2 -H_3^2 )  \\ 
  \end{array} \right) , \\
  q_1 &=& E_2 H_3 - E_3 H_2  .
\end{eqnarray*}
We assume $\gamma $ is constant and satisfies $0 <\gamma \leq 2$.

Now we follow a standard procedure established by Wainwright and his collaborators \cite{Wainwright} to reduce the equations into a form suited for qualitative analysis. In doing so, we carry out the $1+1+2$ split of the space-time so that the gauge freedom about the 1-axis is manifest. We classify the expansion normalised variables according to their behaviour under a rotation around the 1-axis. \\

{\it Scalars (spin-0)} :
\begin{equation*}
\begin{array}{ccccccccc}
 \Sigma _{+} & \equiv & \displaystyle \frac{\sigma _{+}}{H}, &   N_{+} & \equiv &  \displaystyle \frac{n_{+}}{H}, & \Pi _{+}  & \equiv &  \displaystyle \frac{\pi _{+}}{H^2 } , \\
\Omega &\equiv &  \displaystyle \frac{\rho }{3H^2 }, &  A  & \equiv &  \displaystyle \frac{a}{H} ,  & \Xi & \equiv &  \displaystyle \frac{q_1 }{3H^2 } .
\end{array}
\end{equation*}

{\it Tensors (spin-2)} :
\begin{equation*}
\begin{array}{ccccccccc}
 \Sigma _{-} &\equiv & \displaystyle \frac{\sigma _{-}}{H},  &  N_{-}  &\equiv &  \displaystyle \frac{n_{-}}{H} ,&  \Pi _{-} & \equiv &  \displaystyle \frac{\tilde{\pi }_{33}}{\sqrt{3}H^2 }, \\
\Sigma _{\times }  & \equiv &  \displaystyle  \frac{\sigma _{\times}}{H}, &  N_{\times } & \equiv &  \displaystyle \frac{n_{\times }}{H}, & \Pi _{\times }  &\equiv  &  \displaystyle -\frac{\tilde{\pi }_{23}}{\sqrt{3}H^2}.
\end{array}
\end{equation*}

{\it Inhomogeneous (rotation angle itself)}:
\begin{equation*}
R \equiv \frac{\Omega _1}{H} .
\end{equation*}
The variables $\Pi _{+,-\times }$ and $\Xi $ are quadratures of electromagnetic field components and their evolution equations can be obtained from the Maxwell's equations (\ref{eq:div}). We introduce a new time coordinate $\tau $ and deceleration parameter $q$ by
\begin{equation*}
\frac{dt}{d\tilde{\tau }} = \frac{1}{H} , \ \ \ \ \ \dot{H} = -(1+q)H^2 ,
\end{equation*}
and use primes to denote derivatives with respect to $\tilde{\tau }$. We derive the following equations:\\

{\it Einstein equations}:
\begin{eqnarray}
q &=& 2(\Sigma _{+}^2 +\Sigma _{-}^2 + \Sigma _{\times }^2 )+\frac{1}{2}(3\gamma -2)\Omega +\Pi _{+}, \label{eq:ein} \\
1 &=& \Omega + \Sigma _{+}^2 + \Sigma _{-}^2 + \Sigma _{\times }^2 + A^2 + N_{-}^2 + N_{\times }^2 + \Pi _{+}, \label{eq:hamilton} \\
0 &=& \Xi + 2(\Sigma _{+}A + \Sigma _{-}N_{\times }-\Sigma _{\times }N_{-}), \label{eq:mom} \\
\Sigma _{+}^{\prime } &=& (q-2)\Sigma _{+}-2(N_{-}^2 +N_{\times }^2 ) -\Pi _{+}, \nonumber \\
\Sigma _{-}^{\prime } &=& (q-2)\Sigma _{-}+2\Sigma _{\times }R-2(N_{+}N_{-}-AN_{\times })-\Pi _{-} ,\\
\Sigma _{\times }^{\prime } &=& (q-2)\Sigma _{\times }-2\Sigma _{-}R -2(N_{+}N_{\times }+AN_{-})-\Pi _{\times } .\nonumber
\end{eqnarray}

{\it Jacobi identities}:
\begin{equation}
\begin{array}{ccl}
N_{+}^{\prime } &=& (q+2\Sigma _{+})N_{+}+ 6(\Sigma _{-}N_{-}+\Sigma _{\times }N_{\times }), \\
N_{-}^{\prime } &=& (q+2\Sigma _{+})N_{-}+2(\Sigma _{-}N_{+}+N_{\times }R) ,\\
N_{\times }^{\prime } &=& (q+2\Sigma _{+})N_{\times }+2(\Sigma _{\times }N_{+}-N_{-}R) ,\\
A^{\prime } &=& (q+2\Sigma _{+})A  .
\end{array} 
\end{equation}

{\it Class B first integral}:
\begin{equation}
A^2 = h\{ N_{+}^2 -3(N_{-}^2 +N_{\times }^2) \} . \label{eq:jacob}
\end{equation}

{\it Energy conservation for the fluid}:
\begin{equation}
\Omega ^{\prime } = (2q-3\gamma +2)\Omega .
\end{equation}

{\it Maxwell's equations}:
\begin{equation}
\begin{array}{ccl}
\Pi _{+}^{\prime } &=& 2(q-1+\Sigma _{+})\Pi _{+}+2(\Sigma _{-}\Pi _{-}+\Sigma _{\times }\Pi _{\times } +A \Xi ) , \\
\Xi ^{\prime } &=& 2(q-1+\Sigma _{+})\Xi +2(A\Pi _{+}+N_{\times }\Pi _{-}-N_{-}\Pi _{\times }) , \\
\Pi _{-}^{\prime } &=& 2(q-1+\Sigma _{+})\Pi _{-} + 6\Sigma _{-}\Pi _{+}-6N_{\times }\Xi +2R\Pi  _{\times } , \\
\Pi _{\times }^{\prime } &=& 2(q-1+\Sigma _{+})\Pi _{\times } +6\Sigma _{\times }\Pi _{+} +6N_{-}\Xi -2R\Pi _{-} . 
\end{array} \label{eq:quad}
\end{equation}
There is an algebraic constraint among the quadratures:
\begin{equation}
\Pi _{+}^2 =  \Xi ^2 + \frac{1}{3}(\Pi _{-}^2 + \Pi _{\times }^2 )  \label{eq:pi}
\end{equation}
 
Here we comment on the degrees of freedom. The dynamical system is not closed because there is a remaining gauge freedom as we mentioned above. In order to decide the frame to fix the gauge, we face a similar problem as was seen in the analysis of tilted Bianchi class B models by Coley and Hervik \cite{TiltedB}. Namely, that there is not a simple choice of frame for which the governing equations become regular everywhere. Which choice is convenient depends on the equilibrium points or invariant sets under consideration and we will make use of several different choices in the present article. Note that construction of scalar variables such as $\Sigma _{-}N_{-} + \Sigma _{\times }N_{\times }$ would lead to too many constraints as was the case in the Ref. \onlinecite{TiltedB}. Whichever choice we make, however, the dimension of the dynamical system is seven. The counting goes as follows; we take $\Sigma _{+,-,\times}, N_{+,-,\times }$ and $\Pi _{+,-,\times }$ as the fundamental variables; $A$, $\Xi $ and $\Omega $ are determined by (\ref{eq:jacob}), (\ref{eq:pi}) and the Hamiltonian constraint (\ref{eq:hamilton}) respectively; $R$ will be determined by fixing the gauge, which also imposes an additional constraint among the spin-2 variables; finally taking into account the momentum constraint (\ref{eq:mom}), we are left with nine variables and two constraints.

\section{Stability of plane-wave solutions}
The purpose of this section is to extend the result obtained by Hewitt and Wainwright \cite{ClassB} about the stability of vacuum gravitational waves to Einstein-Maxwell equations. We follow the usual procedure of dynamical system analysis.

\subsection{Invariant sets}
To facilitate the understanding of the dynamical system, it is worth sorting out possible {\it invariant sets}, which are defined to be subsets consisting of trajectories specified by certain restrictions which form dynamical systems by themselves. Note that we may assume $A \geq 0$ without loss of generality because of the reflection symmetry about the 2-3 plane. We use the integration constant $\tilde{h} \equiv 1/h$ of (\ref{eq:jacob}) to classify the class B models.
\begin{table}[htdp]
\caption{Electromagnetic Bianchi invariant sets. The entries in the upper half are the general class B models for the source-free electromagnetic field.}
\begin{ruledtabular}
\begin{tabular}{llc}
Notation & Restrictions & Dimension \\ \hline
$M({\rm VI}_h )$ & $\tilde{h}<0, A>0$ & $7$ \\ 
$M({\rm VII}_h )$ & $\tilde{h}>0 , A>0 , N_{+}>0$ & $7$ \\
$ M({\rm IV}) $ & $\tilde{h}=0 , A>0, N_{+} >0,  N_{+}^2 = 3(N_{-}^2 +N_{\times }^2 ) $ & $7$ \\
$M({\rm V})$ & $\tilde{h}=0, A>0, N_{+}=N_{-}=N_{\times }=0$ & $5$ \\ \hline 
$M({\rm II})$ & $A=0,  N_{+}^2 =3(N_{-}^2 +N_{\times }^2 ) $ & $6$ \\
$M({\rm I})$& $A=N_{+}=N_{-}=N_{\times }= \Xi = 0 $ & $4$ \\
\end{tabular}
\end{ruledtabular}
\label{ClassB}
\end{table}%

Table \ref{ClassB} lists the electromagnetic Bianchi invariant sets described by the equations given in the previous section. The restriction on the sign of $N_{+}$ in type VII$_h$ and IV comes from the fact that $N_{+}=0$ by itself forces the system to fall into an invariant set and there is a discrete symmetry interchanging 2- and 3-directions. We refrain from further discussion for disconnected components as it is subtle before the gauge is fixed. $M({\rm V})$ forms a part of the boundary of $M({\rm IV})$.  $M({\rm II})$ is the class A boundary for all VI$_h$, VII$_h$ and IV while it is not the most general electromagnetic type II. Its four dimensional subset and $M({\rm I})$ were investigated by LeBlanc \cite{LeBlancII, LeBlancI}. 

\begin{table}[htdp]
\caption{Electromagnetic Bianchi class B invariant sets with higher symmetry.}
\begin{ruledtabular}
\begin{tabular}{llc}
Notation & Restrictions & Dimension \\ \hline
$SM({\rm VI}_h )$ & $\tilde{h}<0, A>0, N_{+}=0$ & $4$ \\ 
$SM({\rm VII}_h )$ & $\tilde{h}>0 , \Sigma _{-,\times } = N_{-,\times } = \Pi _{-,\times } = 0, A>0, N_{+}>0 $ & $2$ \\
$ SM({\rm V}) $ & $\tilde{h} =0 , \Sigma _{-,\times } = N_{\pm ,\times } = \Pi _{-,\times } = 0 , A>0 $ & $2$ \\
\end{tabular}
\end{ruledtabular}
\label{symmetric}
\end{table}%

Some symmetric invariant sets are shown in Table \ref{symmetric}. $SM({\rm VI}_h)$ has four dimensions and can support non-null Maxwell fields. $SM({\rm VII}_h)$ and $SM({\rm V})$ have an identical two-dimensional structure as dynamical systems and the space-time is locally rotationally symmetric (LRS) \cite{LRS}. They will play a central role in the comparison to perturbations around open FLRW models because we can fully derive the global behaviour of the orbits and integrate them explicitly for some interesting cases. 

Aside from the invariant sets listed above, there are obvious electro-vacuum subsets of them obtained by $\Omega =0$. Here we are not concerned with pure-gravitational orthogonal class B models (i.e. $\Pi _{+}=0$) as they were studied by Hewitt and Wainwright \cite{ClassB}.

\subsection{Equilibrium points}

In dynamical systems analysis, {\it equilibrium points}, which are defined as time-independent solutions of the system, play an important role. Here we focus on two of them which are important concerning the past- and future-asymptotic behaviour of the system. The full list of the equilibrium points in the present setup is given in table \ref{equilibrium}.
\begin{table}[htdp]
\caption{Equilibrium points in the electromagnetic Bianchi class B. Some of the symbols represent several distinct points in the state space because they lie in disconnected boundaries with a same structure. For the notation of the invariant sets, see Hewitt and Wainwright \cite{ClassB}}
\begin{ruledtabular}
\begin{tabular}{lclc}
Symbol & Invariant set & Self-similar solution & Restrictions \\ \hline
$P({\rm I})$ & $B({\rm I})$ & Flat FLRW &  \\
$P({\rm II})$ & $B({\rm II})$ & Collins-Stewart & $\frac{2}{3} <\gamma < 2$ \\
$P({\rm VI }_h)$ & $B({\rm VI} _h )$& Collins & $ h>\frac{\gamma -2}{3\gamma -2}$, $\frac{2}{3} <\gamma < 2$ \\
$PM_1 ({\rm I})$ & $M({\rm I})$ & Jacobs magnetic I & $\frac{4}{3} <\gamma < 2$ \\
$PM_2 ({\rm I})$ & $M({\rm I})$ & LeBlanc & $\frac{8}{5} <\gamma < 2$ \\
$PM({\rm II})$ & $M({\rm II})$ & Dunn-Tupper & $\frac{6}{5} <\gamma < 2$ \\
$\mathcal{K}$ & $B({\rm I})$ & Kasner vacuum & \\
$\mathcal{J}$ & $B({\rm I})$ & Jacobs stiff fluid & $\gamma =2 $ \\
$\mathcal{P}M(\tilde{h})$ &  $ \begin{array}{l}
   M({\rm VI}_h) \\ 
    M({\rm VII}_h) \\ 
    M({\rm IV}) \\ 
  \end{array}$ & Electro-vacuum plane-waves & $  \begin{array}{c}
     \tilde{h}<0 \\ 
    \tilde{h}>0 \\ 
    \tilde{h}=0 \\ 
  \end{array} $ \\
$M(\tilde{h})$ & $  \begin{array}{c}
    SM({\rm VII}_h) \\ 
    SM({\rm V}) \\ 
  \end{array}$ & Milne universe & $  \begin{array}{c}
    \tilde{h}>0 \\ 
    \tilde{h}=0 \\ 
  \end{array}$ \\
$\mathcal{F}(\tilde{h}) $ &$  \begin{array}{c}
    SM({\rm VII}_h) \\ 
    SM({\rm V})\\ 
  \end{array}$ & Open FLRW & $  \begin{array}{c}
     \tilde{h}>0 \\ 
     \tilde{h}=0 \\ 
  \end{array}$, $\gamma = \frac{2}{3}$  \\ 
  \end{tabular}
\end{ruledtabular}
\label{equilibrium}
\end{table}%

\begin{description}
\item[ Kasner equilibrium points  $\mathcal{K}$ ]

Those are a one-parameter family of equilibrium points corresponding to the well-known Kasner vacuum solutions. They have always been the key to understand past-asymptotic behaviour in Bianchi cosmologies and we expect the same is true here. To carry out the stability analysis, it is convenient to fix the gauge by setting $\Sigma _{\times }=0$. $R$ is determined by
\begin{equation*}
\Sigma _{-}R = -N_{+}N_{\times } - AN_{-} - \frac{1}{2}\Pi _{\times } .
\end{equation*}
In the resulting system of equations, $\mathcal{K}$ is characterised as follows:
\begin{eqnarray*}
\Sigma _{+} = \cos \psi \ \ \ \ \ \Sigma _{-} = \sin \psi \ \ \ \ \  -\pi < \psi \leq \pi  \\
A= N_{\pm ,\times } = \Pi _{+}  = \Omega =  0.
\end{eqnarray*}
This can be pictured as a unit circle on $\Sigma _{+}$-$\Sigma _{-}$ plane with $\psi $ measuring the angle. Since all the internal curvatures ($A, N_{\pm, \times }$) and electromagnetic quadratures are zero, the constraints (\ref{eq:jacob}) and (\ref{eq:pi}) become degenerate. Therefore we have nine eigenvalues as follows:
\begin{eqnarray*}
\lambda _1 = 0 \ \ \ \ \ \lambda _2 = -3(\gamma -2) \ \ \ \ \ \lambda _{3,4} = 2(1+\cos \psi \pm \sqrt{3}\sin \psi ) \\
\lambda _{5,6,7} = 2(1+\cos \psi ) \ \ \ \ \ \lambda _{8,9} = 1+\cos \psi \pm \sqrt{3}\sin \psi .
\end{eqnarray*}
It is difficult to decide which of them represent the true dynamical degrees of freedom. However, as we can see, the arc specified by $-\frac{\pi }{3} < \psi < \frac{\pi }{3}$, which we shall call $\tilde{\mathcal{K}}$, has positive eigenvalues for all the directions aside from the zero around the circumference. The rest of $\mathcal{K}$ except for a finite number of points have a stable manifold of at least one dimension represented by either Rosen orbits, Taub orbits, or rotating Kasner orbits studied by LeBlanc for magnetic type I \cite{LeBlancI} and type II \cite{LeBlancII}. This past-stability of the arc $\tilde{\mathcal{K}}$ also suggests that the Mixmaster oscillation will not occur in this class of models. We denote the points $\psi = 0$ by $Q_1$ and $\psi = \pi $ by $T_1$. They will appear in the LRS invariant sets considered later.

\item[Electromagnetic plane-wave equilibrium points $\mathcal{P}M_h $]
They are essentially the only nontrivial electromagnetic class B equilibrium points and of interest since the pure-gravitational counterpart (a subset of the electromagnetic ones) is a stable attractor of the expanding class B space-times \cite{ClassB}. The electromagnetic field is null and the corresponding self-similar solutions were discussed by Harvey et.al. \cite{Harvey}, Araujo and Skea \cite{Araujo}, and Hervik \cite{Hervik} in different contexts. Here we use the gauge freedom to take $N_{\times }=0$, which implies
\begin{equation*}
R= \frac{N_{+}}{N_{-}}\Sigma _{\times } .
\end{equation*}
They are expressed as a plane of equilibria, namely a two-parameter family of points characterised by constants $r$ and $s$. 
\begin{eqnarray*}
&& \Sigma _{+} = -r \ \ \ \ \ \Sigma _{\times } = -N_{-} = s \ \ \ \ \ \Sigma _{-} =0 \ \ \ \ \ N_{+}^2 = \tilde{h} (1-r )^2 +3s^2 \\
&& A = 1-r \ \ \ \ \  \Pi _{+} = \Xi  = 2r(1-r) -2s^2 \ \ \ \ \ \ \Pi _{-} =\Pi _{\times } = \Omega =0  \\
&& 0 \leq  r\leq 1 \ \ \ \ \ s^2 \leq r(1-r)  .
\end{eqnarray*}
For VI$_h$, we have an additional restriction
\begin{equation*}
-\frac{\tilde{h}}{3}(1-r)^2 \leq s^2 .
\end{equation*}
Thus the area of the plane becomes smaller for greater $h$. When $s^2 = r(1-r)$, they correspond to the pure-gravitational plane-wave space-times. The eigenvalues are given by
 \begin{eqnarray*}
&& \lambda _1 = 4r-3\gamma +2 \ \ \ \ \ \lambda _2 = \lambda _3 = 0  \\
&& \lambda _{4,5,6,7} = 2(r-1) \pm \sqrt{-2(5N_{+}^2 + 3\Pi _{+})\pm 6N_{+}\sqrt{N_{+}^2 + 6\Pi _{+}}}
\end{eqnarray*}
where the square root always gives pure imaginary values. The two zero eigenvalues reflect the fact that the equilibrium points form a two-dimensional subset. For type VII$_h$ and IV with $\frac{2}{3}<\gamma < 2$, they always have a stable subset specified by $r<\frac{1}{4}(3\gamma -2)$. For type VI$_h$, this condition cannot be satisfied if $ h>\frac{\gamma -2}{3\gamma -2}$, in which case $P({\rm VI}_h)$ becomes the attractor of the dynamical system. These features are very much analogous to those found for the pure-gravitational case. Note that the direction of $\lambda _1$ corresponds to the perturbation in $\Omega $. Therefore, $\mathcal{P}M(\tilde{h})$ is a local sink in the electro-vacuum ($\Omega =0$) class B models. 
 \end{description}

\subsection{Monotone function}

The condition $\Omega =0$ gives rise to six-dimensional boundaries of the full electromagnetic class B models. Analogous to the vacuum Einstein models \cite{ClassB}, we can show that $Z \equiv (1+\Sigma _{+})^2 -A^2 $ is monotonic:
\begin{equation*}
Z^{\prime } = 2(q-2)Z -3(\gamma -2)(1+\Sigma _{+})\Omega 
\end{equation*}
which implies $|Z|$ is monotone decreasing for $\Omega =0$ since $q< 2$ in the interior of $M({\rm IV})$, $M({\rm VI}_h)$ and $M({\rm VII}_h)$. In fact, $Z$ itself is monotone decreasing since $Z \geq 0$. To see this, first note that (\ref{eq:hamilton}) $\pm $ (\ref{eq:mom}) yield
\begin{equation*}
1-\Omega = (\Sigma _{+}\pm A)^2 + (\Sigma _{-} \pm N_{\times })^2 + (\Sigma _{\times }\mp N_{-})^2 +\Pi _{+} \pm \Xi .
\end{equation*}
Taking into account (\ref{eq:pi}), $|\Xi | \leq \Pi_{+} $ and therefore $|\Sigma _{+} \pm A | \leq 1$. Then it follows that
\begin{equation*}
Z = (1+\Sigma _{+}+A)(1+\Sigma _{+}-A) \geq 0 .
\end{equation*}
Thus, we obtain the following result, which is a direct generalisation of the proposition 5.1 in Ref. \onlinecite{ClassB}.
\begin{prop}
\item For the vacuum Einstein-Maxwell Bianchi class B models, i.e. $\displaystyle M({\rm IV})\big| _{\Omega = 0}$, $\displaystyle M({\rm VI}_h)\big| _{\Omega = 0}$ and $\displaystyle M({\rm VII}_h)\big| _{\Omega = 0}$ except for a set of measure zero, the past attractor is $\displaystyle \tilde{\mathcal{K}}$ and the future attractor is $\mathcal{P}M(\tilde{h})$. 
\end{prop}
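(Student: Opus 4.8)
The plan is to adapt the argument of Hewitt and Wainwright \cite{ClassB} almost verbatim, the only new feature being that the monotone function $Z$, the constraints, and the two relevant equilibrium sets all survive the addition of the electromagnetic quadratures. First I would note that on $\Omega=0$ the reduced (gauge-fixed) state space is compact: (\ref{eq:hamilton}) bounds $\Sigma_{+,-,\times},A,N_{-,\times},\Pi_{+}$ since every term in it is nonnegative ($\Pi_{+}\ge0$ by (\ref{eq:pi})), and the class~B first integral (\ref{eq:jacob}) with $A\ge0$ then bounds $N_{+}$ in each of $M(\mathrm{IV}),M(\mathrm{VI}_{h}),M(\mathrm{VII}_{h})$; hence every orbit has nonempty compact $\alpha$- and $\omega$-limit sets, and since $Z$, $q$, $\Sigma_{+}$ and $A$ are built from rotation-invariant (spin-$0$) quantities the reasoning below is insensitive to the particular gauge fixed on each neighbourhood.

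On $\Omega=0$ we have $Z'=2(q-2)Z$, so $Z$ (which, as shown above, is $\ge0$) is strictly decreasing wherever $q<2$, i.e.\ everywhere except on the invariant sets $\{Z=0\}$ and $\{q=2\}$; by the monotonicity principle the $\alpha$- and $\omega$-limit sets of every orbit lie in $\{Z=0\}\cup\{q=2\}$. From $q=2-2A^{2}-2N_{-}^{2}-2N_{\times}^{2}-\Pi_{+}$, a consequence of (\ref{eq:ein}) and (\ref{eq:hamilton}) at $\Omega=0$, together with (\ref{eq:jacob}), one identifies $\{q=2\}=\mathcal{K}$, on which $Z=(1+\Sigma_{+})^{2}$; and imposing $Z=0$ in (\ref{eq:hamilton})$\pm$(\ref{eq:mom}), using $\Pi_{+}\pm\Xi\ge0$ and (\ref{eq:pi}), forces $A=1+\Sigma_{+}$, $\Sigma_{-}=N_{\times}$, $\Sigma_{\times}=-N_{-}$, $\Pi_{-}=\Pi_{\times}=0$, $\Xi=\Pi_{+}$, so that $\{Z=0\}$ is a low-dimensional invariant set whose only equilibria are $\mathcal{P}M(\tilde h)$ and the degenerate Kasner corner $T_{1}$.

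For the past attractor: if an orbit is not contained in the measure-zero set $\{Z=0\}$ then $Z$ stays above a positive constant, so its $\alpha$-limit set lies in $\{q=2\}\cap\{\Sigma_{+}=\mathrm{const}\}$ and, being connected, is a single point of $\mathcal{K}$, i.e.\ the orbit lies in that point's unstable manifold. The eigenvalues listed above finish it: on the arc $\tilde{\mathcal{K}}$ all transverse eigenvalues are positive, so its unstable manifold is open and $\tilde{\mathcal{K}}$ is the past attractor, whereas every point of $\mathcal{K}\setminus\tilde{\mathcal{K}}$ except finitely many has at least one negative eigenvalue, so the union of their unstable manifolds has positive codimension and forms the measure-zero exceptional set. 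For the future attractor one argues dually: every orbit either converges to a single point of $\mathcal{K}$ — nongeneric, because each such point other than $T_{1}$ has a positive transverse eigenvalue within $\{\Omega=0\}$ ($\lambda_{5,6,7}=2(1+\cos\psi)>0$ for $\psi\neq\pi$), so the union of the corresponding stable manifolds is measure zero — or else $Z\to0$ and the $\omega$-limit set lies in $\{Z=0\}$. On $\{Z=0\}\cap\{\Omega=0\}$ the eigenvalue computation shows $\mathcal{P}M(\tilde h)$ is a normally hyperbolic plane of equilibria whose transverse eigenvalues $\lambda_{4,5,6,7}=2(r-1)\pm(\text{pure imaginary})$ have negative real part for $r<1$ (the $\lambda_{1}$-direction is the absent fluid perturbation and $\lambda_{2,3}=0$ are tangent to the plane), hence a local sink, so it remains only to show it attracts all but a measure-zero subset of $\{Z=0\}$.

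The step I expect to be the real obstacle is precisely this last one — controlling the global flow on $\{Z=0\}$, in particular showing that the fully degenerate point $T_{1}$ and any remaining boundary equilibria of Table~\ref{equilibrium} that survive $\Omega=0$ capture only a measure-zero set — since linearisation fails at $T_{1}$ within $\{\Omega=0\}$. I would attack it either by exhibiting an auxiliary monotone function on $\{Z=0\}$, or, since that set carries so few dimensions, by integrating the restricted system directly; everything else is a routine transcription of the vacuum analysis in \cite{ClassB}, with the electromagnetic variables carried along.
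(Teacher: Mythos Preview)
Your approach is essentially that of the paper: exploit the monotone function $Z$ with $Z'=2(q-2)Z$ on $\Omega=0$, identify $\{q=2\}$ with $\mathcal{K}$ and $\{Z=0\}$ with the plane-wave set, and use the linearised eigenvalues to refine the past attractor to the arc $\tilde{\mathcal{K}}$. The paper's version is terser but structurally identical.

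The ``real obstacle'' you flag at the end is illusory. You have already derived that on $\{\Omega=0\}$ the single condition $Z=0$, combined with (\ref{eq:hamilton})$\pm$(\ref{eq:mom}) and (\ref{eq:pi}), forces $A=1+\Sigma_{+}$, $\Sigma_{-}=N_{\times}$, $\Sigma_{\times}=-N_{-}$, $\Xi=\Pi_{+}$, $\Pi_{-}=\Pi_{\times}=0$. Substitute these into the Hamiltonian constraint to get $\Pi_{+}=-2\Sigma_{+}(1+\Sigma_{+})-2(\Sigma_{-}^{2}+\Sigma_{\times}^{2})$ and into (\ref{eq:ein}) to get $q=-2\Sigma_{+}$; the class~B integral then fixes $N_{+}$. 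A direct check of each evolution equation under these substitutions now shows that \emph{every} right-hand side vanishes identically: the set $\{Z=0,\Omega=0\}$ consists entirely of equilibria and is precisely $\overline{\mathcal{P}M(\tilde{h})}$. (Equivalently, count dimensions: after the constraints and a gauge choice only two parameters remain, matching the two-parameter plane.) There is therefore no global flow on $\{Z=0\}$ to control and no auxiliary monotone function is needed; the paper simply asserts this identification in one line.

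As for $T_{1}$: it lies on the boundary $A=0$, outside the interior of the class~B sets, and has $Z=0$. For the past it is excluded because any interior orbit with $Z>0$ has $Z$ nondecreasing toward the past, so its $\alpha$-limit cannot sit at $Z=0$; this is the paper's remark that $T_{1}\in\overline{\mathcal{P}M(\tilde{h})}$. For the future it is a single boundary point of the attracting plane and is absorbed in the ``except for a set of measure zero'' clause, so no centre-manifold analysis is required.
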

\begin{proof}
  To obtain an everywhere well-defined bounded dynamical system, we partially fix the gauge by setting $R=0$. Then we can apply LaSalle's invariance principle and see that the past- and future-asymptotic sets are contained in the subset $Z^{\prime }=0$. Since Z is monotone decreasing, the past and future are specified by $q=2$ and $Z=0$ respectively. It is easy to see that $q=2$ uniquely characterises the Kasner circle $\mathcal{K}$ and the condition $(1+\Sigma _{+})^2 =A^2 $ is consistent with the plane-wave solutions $\mathcal{P}M(\tilde{h})$ only. Note that these characterisations are gauge-invariant. Therefore, we conclude that the future attractor is $\mathcal{P}M(\tilde{h})$ as it is always stable in the vacuum invariant sets. The past attractor is contained in $\mathcal{K}$. We can specify it to be $\tilde{\mathcal{K}}$ since for all the points in $\mathcal{K} \backslash \overline{\tilde{\mathcal{K}}}$ except $T_1 $, the only zero-eigenvalue is the one in the direction along the circumference. Then the orbits emanating from $\mathcal{K} \backslash \overline{\tilde{\mathcal{K}}}$ lie in its unstable manifold, therefore are at most of measure zero (e.g. see Theorem 4.1 in Aulbach \cite{Aulbach}). $T_1$ cannot be a past attractor as $T_1 \in \overline{\mathcal{P}M(\tilde{h})}$.
  \end{proof}
  
  \section{LRS models}
 The LRS assumption leads to a two-dimensional dynamical system $SM({\rm V})$ or $SM({\rm VII}_h)$, both of which contain the open FLRW model. Only the null Maxwell field is consistent with the geometry. We visualise the dynamics to facilitate the comparison to perturbations around FLRW later. The equations are given by
 \begin{eqnarray*}
 \Sigma _{+}^{\prime } &=& (q-2)\Sigma _{+} -\Pi _{+}, \\
 A^{\prime } &=& (q+2\Sigma _{+})A , \\
 q &=& \frac{1}{2}(3\gamma -2)(1-A^2 ) -\frac{1}{2}(3\gamma -6)\Sigma _{+}^2 -\frac{1}{2}(3\gamma -4)\Pi _{+}, \\
 \Pi _{+}^2 &=& 4A^2 \Sigma _{+}^2 ,  \label{eq:momentum} \\
 \Omega &=& 1 - A^2 -\Sigma _{+}^2 -\Pi _{+}, \\
 N_{+} &=& \sqrt{\tilde{h}}A .
 \end{eqnarray*}
The fourth equation was obtained by combining the momentum constraint with (\ref{eq:pi}). The physical region in the $A$-$\Sigma _{+}$ plane is defined by
\begin{equation*}
0<A<1,  \ \ \ \ \ \Pi _{+} >0, \ \ \ \ \ \Omega >0 
\end{equation*}
and its boundaries. There are two disconnected invariant sets $\Sigma _{+} >0$ and $\Sigma _{+}<0$ separated by the open FLRW orbit $\Sigma _{+} = \Pi _{+} =0$. Within each of their interior, we can solve the quadratic constraint as $\Pi _{+} = \pm 2A\Sigma _{+}$. Substituting this into the Hamiltonian constraint and using $\Omega >0$ yield
\begin{equation*}
A+\Sigma _{+}<1 , \ \ \ \ \ A-\Sigma _{+}<1 .
\end{equation*}
The projection of $\mathcal{P}M(\tilde{h})$ onto the LRS invariant sets is the line $A-\Sigma _{+} = 1$. At the ends of the line are located $M(\tilde{h})$ and $T_1$. They have one zero-eigenvalue along the line and another $4r-3\gamma +2$ representing the perfect fluid mode. Thus the points with $r > \frac{1}{4}(3\gamma -2) $ are local sources and those with $r <\frac{1}{4}(3\gamma -2)$ are local sinks; increasing $\gamma $ means increasing the unstable part in the line $A-\Sigma _{+} =1$. The other source and sinks are summarised in table \ref{fig:asympt}.

  \begin{table}[h]
\caption{\label{fig:asympt}The list of local sources and sinks in the LRS models. The system shows different behaviours according to the direction of Poynting vector.}
\begin{tabular}{c|c|c|c|c|c|c|c|c} \hline \hline
Invariant set &  \multicolumn{4}{c|}{$\Sigma _{+}>0$} & \multicolumn{4}{c}{$\Sigma _{+}<0$ } \\ \hline
 $\gamma $ & $ [ 0, \frac{2}{3} ) $ &$\frac{2}{3} $ & $(\frac{2}{3} , 2) $ & $2$ & $ [ 0, \frac{2}{3} ) $ &$\frac{2}{3} $ & $(\frac{2}{3} , 2) $ & $2$ \\ \hline
 Local sources &  \multicolumn{3}{|c|}{$Q_1$} & $\mathcal{J}$ & \multicolumn{3}{|c|}{$\mathcal{P}M(\tilde{h})$} & $\mathcal{J}$ \\ \hline
 Local sinks & $P({\rm I})$ & $\mathcal{F}(\tilde{h}) $ & \multicolumn{2}{|c|}{$M(\tilde{h}) $} & $P({\rm I})$ & $\mathcal{F}(\tilde{h})$ &\multicolumn{2}{|c}{$\mathcal{P}M(\tilde{h})$ } \\ \hline \hline
 \end{tabular}
 \end{table}

We can find monotonic functions for both of those invariant sets $\Sigma _{+} \gtrless 0$ when $\gamma \geq \frac{2}{3}$.
\begin{equation*}
A^{\prime } = (q+2\Sigma _{+})A >0 \ \ \ \ {\rm for}\ \ \ \ \Sigma _{+}>0,
\end{equation*}
\begin{equation*}
(\alpha A -\beta \Sigma _{+})^{\prime } = \left[ \alpha ^2 A -\beta ^2 \Sigma _{+} + (\alpha  A +\beta \Sigma _{+})^2 \right] \left[ 1-(A-\Sigma _{+})\right] >0 \ \ \ \ {\rm for } \ \ \ \ \Sigma _{+}<0,
\end{equation*}
\begin{equation*}
\alpha \equiv \frac{1}{2}(3\gamma -2) , \ \ \ \ \ \ \beta \equiv \frac{1}{2}(3\gamma -6) .
\end{equation*}
Thus, by the LaSalle's invariance principle, the local sources are the past attractors and the local sinks are the future attractors. The phase portrait of a representative case ($\gamma =1$) is given in figure \ref{fig:dust}.

  The case $\Sigma _{+}<0$ with $\frac{2}{3} <\gamma < 2$ is worth elucidating. Since two distinct orbits never meet each other, we infer that an orbit started from a more anisotropic state (larger $|\Sigma _{+}|$) than the other ends up in a less anisotropic future-asymptotic state. For those models, both past and future attractors have non-zero $\Pi _{+}$ but they isotropise and become close to flat at intermediate times. 
  
   \begin{figure} [h]
 \includegraphics[scale=0.20]{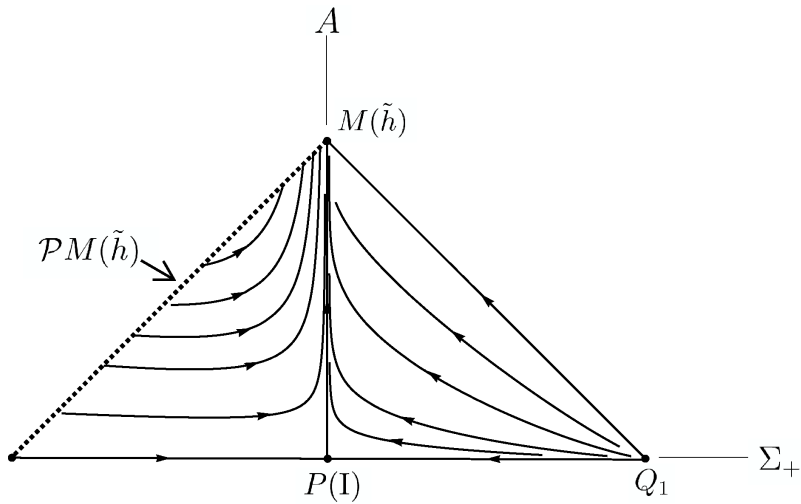}
 \caption{\label{fig:dust}Phase portrait of type V and VII$_h$ LRS models for dust ($\gamma =1$). The diagonal dotted line represents the projection of equilibrium points $\mathcal{P}M_h$ onto the LRS invariant set.}
 \end{figure}

\section{Exact Solutions for LRS space-times}

We are going to present some exact solutions, including new ones, for the dynamical system considered in the previous section, which should reproduce the asymptotic behavior described there. 

In the parametrisation introduced by Siklos \cite{Siklos}, the line element for LRS type V and VII$_h$ is given by
\begin{equation*}
ds^2 = -a^{-2}d\tau ^2 + a^{-2} dx^2  + e^{2(\lambda -x)}(dy^2 + dz^2)\ \ \ \ \ \ \ a>0 .
\end{equation*}
Defining new variables
\begin{equation}
\alpha \equiv \frac{1}{a}\frac{da}{d\tau }, \ \ \ \ \ \ \ \beta \equiv \frac{d\lambda }{d\tau }, \label{eq:tau}
\end{equation}
the Einstein equations reduce to
\begin{eqnarray}
\frac{d\alpha }{d\tau }-2\frac{d\beta }{d\tau }-2\alpha \beta -2 \beta ^2 &=& \frac{1}{2}(3\gamma -2)a^{-2}\rho + 3a^{-2}\pi _{+} , \label{eq:1} \\
2(\alpha +\beta ) &=& a^{-2 }q_1 , \label{eq:2} \\
-\frac{d\alpha }{d\tau } -2\alpha \beta -2 &=& \frac{1}{2}(2-\gamma )a^{-2}\rho + 3a^{-2}\pi _{+}  ,\label{eq:3} \\
\frac{d\beta }{d\tau }+2\beta ^2 -2 &=& \frac{1}{2}(2-\gamma )a^{-2 }\rho \label{eq:4} .
\end{eqnarray}
Combining eqns. $ (\ref{eq:1})+(\ref{eq:3}) +2\times (\ref{eq:4}) $ gives the generalized Friedmann equation
\begin{equation}
\beta ^2 - 2\alpha \beta -3 = a^{-2}(\rho +3\pi _{+} ) \label{eq:5} .
\end{equation}
The energy density of the electromagnetic field $\rho _{em}$ is given by $\rho_{em} = 3\pi _{+}$. Note that $\rho _{em}^2 = q_1 ^2 $ because only null-field is consistent with the LRS symmetry. We term $q_1 = \rho _{em}$ be left-handed and $q_1 = -\rho _{em}$ be right-handed.

\subsection{Left-handed solutions}
\begin{description}
\item[ Eelctro-vacuum ] 
\begin{eqnarray*}
\lambda ^2 ds^2 = e^{2\nu \tau }(-d\tau ^2 + dx^2 ) +e^{2(\tau -x)}(dy^2 + dz^2 )
\end{eqnarray*}
\begin{equation*}
-\infty < \tau < \infty \ \ \ \ \ \ \ \nu >1 
\end{equation*}
\begin{equation*}
 {\rm Alternative \ metric } \ \ \ : \ \ \ ds^2 = -dt^2 + (\nu t)^2 dx^2 +(\nu  t)^{\frac{2}{\nu }}e^{-2x}(dy^2 +dz^2 )
\end{equation*}
This is just LRS specialisation of $\mathcal{P}M(\tilde{h})$ with $s =0$. The parameter $\nu $ is related to $r$ in the section 3 by
\begin{equation}
r = \frac{\nu -1}{\nu +2} . \label{eq:nu}
\end{equation}

\item[ Perfect fluid $\gamma = \frac{2}{3} $  ] 
\begin{eqnarray*}
\lambda ^2 ds^2 = \frac{e^{2r\tau }}{\sqrt{1+e^{2(1-r)\tau }}}(-d\tau ^2 +dx^2 )+ e^{2(r\tau -x)}(1+e^{2(1-r)\tau })(dy^2 +dz^2)
\end{eqnarray*}
\begin{equation*}
-\infty < \tau < \infty \ \ \ \ \ \ \ r>1 
\end{equation*}

{\it Asymptotic behaviour } : 
\begin{eqnarray*}
\tau \rightarrow -\infty & : & ds^2 = -dt^2 + \left( \frac{3r-1}{2}t\right) ^2 dx^2 + \left( \frac{3r-1}{2}t \right) ^{\frac{4}{3r-1}}e^{-2x}(dy^2 +dz^2 ) 
\end{eqnarray*}
\begin{eqnarray*}
 \tau \rightarrow +\infty & : & ds^2 = -dt^2 + (rt)^2 \left[ dx^2 + e^{-2x}(dy^2 + dz^2 ) \right]
\end{eqnarray*}

\item[ Perfect fluid $\gamma =2 $  ] 
\begin{eqnarray*}
\lambda ^2 ds^2 = e^{4r\tau } \sinh ^{1+2r}2\tau (-d\tau ^2 +dx^2 )+\sinh 2\tau e^{-2x}(dy^2 +dz^2)
\end{eqnarray*}
\begin{equation*}
0<\tau < \infty \ \ \ \ \ \ \ r >0
\end{equation*}

{\it Asymptotic behaviour } :
\begin{eqnarray*}
\tau \rightarrow 0 & : & ds^2 = -dt^2 + t^{\frac{2(1+2r)}{3+2r}}dx^2 + t^{\frac{2}{3+2r}}e^{-2cx}(dy^2 + dz^2 ) 
\end{eqnarray*}
\begin{eqnarray*}
 \tau \rightarrow \infty & : & ds^2 = -dt^2 + \left[ (4r+1)t\right] ^2 dx^2 + \left[ (4r+1)t\right] ^{\frac{2}{1+4r}}e^{-2x}(dy^2 + dz^2 )
\end{eqnarray*}

\item[Perfect fluid $\gamma = \frac{4}{3}$ ]
\begin{eqnarray*}
 \lambda ^2 ds^2 = \frac{e^{4r\eta }}{r-\tanh \eta }\left[ -\frac{(r^2-1)^2 d\eta ^2}{4(r\cosh \eta -\sinh \eta )^2} + \frac{dx^2}{\cosh ^2 \eta } \right] + \frac{e^{r\eta -2x}\cosh \eta }{r-\tanh \eta }(dy^2 +dz^2 )
\end{eqnarray*}
\begin{equation*}
-\infty < \eta < \infty \ \ \ \ \ \ \ r>1 
\end{equation*}
The time coordinate $\eta $ is related to $\tau $ by
\begin{equation*}
\frac{d\eta }{d\tau } = r-\sqrt{r^2 -1}\tanh \left( 2\sqrt{r^2 -1}\eta \right) .
\end{equation*}
{\it Asymptotic behaviour } : 
\begin{eqnarray*}
 \eta \rightarrow - \infty & : & ds^2 = -dt^2 + \left( \frac{2(2r+1)}{r-1}t \right) ^2 dx^2 + \left( \frac{2(2r+1)}{r-1}t\right) ^{\frac{r-1}{2r+1}}e^{-2x}(dy^2 + dz^2 ) 
\end{eqnarray*}
\begin{eqnarray*}
 \eta \rightarrow +\infty & : & ds^2 = -dt^2 + \left( \frac{2(2r-1)}{r+1}t \right) ^2 dx^2 + \left( \frac{2(2r-1)}{r+1}t \right) ^{\frac{r+1}{2r-1}}e^{-2x}(dy^2 + dz^2 )
\end{eqnarray*}

\end{description}

\subsection{Right-handed solutions}
\begin{description}
\item[ Electro-vacuum ] 
\begin{eqnarray*}
\lambda ^2 ds^2 = e^{3\tau }\sinh ^{-\frac{1}{2}}2\tau ( -d\tau ^2 + dx^2 ) + \sinh 2\tau e^{-2x}(dy^2 + dz^2 )
\end{eqnarray*}
\begin{equation*}
0 < \tau < \infty
\end{equation*}

{\it Asymptotic behaviour } : 
\begin{eqnarray*}
\tau \rightarrow 0 & : & ds^2 = -dt^2 + t^{-\frac{2}{3}}dx^2 +t^{\frac{4}{3}}e^{-2cx}(dy^2 +dz^2 ) 
\end{eqnarray*}
\begin{eqnarray*}
\tau \rightarrow \infty & :& dx^2 = -dt^2 + t^2 \left[ dx^2 +e^{-2x}(dy^2 + dz^2 ) \right] 
\end{eqnarray*}

\item[ Perfect fluid $\gamma = \frac{2}{3}$  ] 
\begin{eqnarray*}
 \lambda ^2 ds^2 = \frac{e^{2r\tau }}{\sqrt{1-e^{-2(r+1)\tau }}}(-d\tau ^2 +dx^2 ) +e^{2(r\tau -x)}(1-e^{-2(r+1)\tau })(dy^2 +dz^2 )
\end{eqnarray*}
\begin{equation*}
0 < \tau < \infty \ \ \ \ \ \ \ r>1 
\end{equation*}

{\it Asymptotic behaviour } : 
\begin{eqnarray*}
\tau \rightarrow 0 & : & ds^2 = -dt^2 + t^{-\frac{2}{3}}dx^2 + t^{\frac{4}{3}}e^{-2cx}(dy^2 + dz^2 ) 
\end{eqnarray*}
\begin{eqnarray*}
\tau \rightarrow \infty & : & ds^2 = -dt^2 + (rt )^2 \left[ dx^2 + e^{-2x}(dy^2 + dz^2 ) \right]
\end{eqnarray*}

\item[ Perfect fluid $\gamma = 2$  ] 
\begin{eqnarray*}
\lambda ^2 ds^2 = e^{4r\tau }\sinh ^{1-2r} 2\tau (-d\tau ^2 + dx^2 )+\sinh 2\tau e^{-2x}(dy^2 +dz^2 )
\end{eqnarray*}
\begin{equation*}
0 < \tau < \infty \ \ \ \ \ \ \  0<r<\frac{3}{4}
\end{equation*}

{\it Asymptotic behaviour } :
\begin{eqnarray*}
\tau \rightarrow 0 & : & ds^2 = -dt^2 + t^{\frac{2(1-2r)}{3-2r}}dx^2 + t^{\frac{2}{3-2r}}e^{-2cx}(dy^2 + dz^2 ) 
\end{eqnarray*}
\begin{eqnarray*}
\tau \rightarrow \infty & : & ds^2 = -dt^2 + t^2 \left[ dx^2 + e^{-2x}(dy^2 + dz^2 ) \right]
\end{eqnarray*}

\item[ Perfect fluid $\gamma = \frac{4}{3}$ ]
\begin{eqnarray*}
 \lambda ^2 ds^2 =  e^{8r\eta } A(\eta )^{-2}B(\eta ) \left( -B(\eta )^{2}d\eta ^2 + dx^2 \right) + e^{2r\eta -2x}A(\eta )B(\eta ) \left( dy^2 +dz^2 \right)
\end{eqnarray*}
\begin{equation*}
A(\eta ) =     \frac{\sin \left( 2\sqrt{1- r^2 } \eta \right) }{\sqrt{1- r^2 }} \ \ \ B(\eta ) = 
    \frac{1}{\sqrt{1-r^2 }\cot \left( 2\sqrt{1-r^2 } \eta \right)-r }  
\end{equation*}
\begin{equation*}
0 < \eta < \eta _{\infty } \ \ \ \ \ \ \ B(\eta _{\infty }) ^{-1} \equiv 0  \ \ \ \ \ r>0
\end{equation*}
Here we used the convention that
\begin{equation*}
\frac{ \sin \epsilon x}{\epsilon } = \left\{   \begin{array}{cc}
    \sin x & \epsilon = 1 \\ 
    x & \epsilon = 0 \\
    \sinh x & \epsilon = i \\ 
  \end{array} \right . 
  \end{equation*}
  and the corresponding expressions for $\cos x$.
 The original time coordinate $\tau $ is given by
 \begin{equation*}
 \frac{d\eta }{d\tau } = -r+\sqrt{1-r^2 }\cot \left( 2\sqrt{1-r^2 }\eta \right) .
 \end{equation*}
{\it Asymptotic behaviour } : 
\begin{eqnarray*}
\eta \rightarrow 0 & : & ds^2 = -dt^2 + t^{-\frac{2}{3}}dx^2 + t^{\frac{4}{3}}e^{-2cx}(dy^2 + dz^2 ) 
\end{eqnarray*}
\begin{eqnarray*}
\eta \rightarrow \eta _{\infty } & : & ds^2 = -dt^2 + t^2 \left[ dx^2 + e^{-2x}(dy^2 + dz^2 ) \right]  
\end{eqnarray*}

\end{description}

The vacuum and $\gamma =2$ solutions were discovered by Ftaclas and Cohen\cite{Ftaclas}; Roy and Singh\cite{Roy} gave the right-handed radiation solution. The other solutions appear to be new. It can be observed that all the left-handed solutions except $\gamma = \frac{2}{3}$ are asymptotically plane-waves in the future and the decay rate of electromagnetic energy density is given by $t^{-2}$. In terms of the average scale factor $l$, it can be expressed as $\rho _{em} \propto l^{-\frac{6\nu }{2+\nu }}$ using the parameter in (\ref{eq:nu}). When they are close to Milne, namely, in the isotropic limit $\nu \rightarrow 1$, the decay is as slow as $l^{-2}$. On the other hand, right-handed solutions which are attracted towards the Milne universe have a faster decaying rate of $l^{-6}$. The difference between the helicity is clearer when $\gamma =\frac{3}{2}$. Looking at the left- and right-orbit that have a same attractor in $\mathcal{F}(\tilde{h})$, their decaying rate always differs by $\frac{4}{r}$ with $r>1$. Generally speaking, as far as future asymptotic behaviour is concerned, left-handed solutions have a slower decaying rate than right-handed ones. 

\section{Comparison with the Perturbative Analysis}

Having shown the general stability of plane-waves and derived some detailed behaviour of the orbits in the LRS models, we now compare them to the results obtained in the perturbation around open FLRW. The type V and VII$_h$ models are anisotropic generalisations of open FLRW and their isotropic limit should exhibit some features seen in the perturbation of long wavelengths. We will see that the LRS modes appear to be generic and correspond to a super-adiabitc mode discussed by Barrow and Tsagas \cite{Tsagas}.

First, let us review the vector modes of electromagnetic perturbation around an open FLRW background. Following the convention of Barrow et.al. \cite{BMT}, the linearised Maxwell's equations in an arbitrary background frame are given by
\begin{eqnarray*}
\dot{B}_a &=& -2HB_a - {\rm curl }E_a , \\
\dot{E}_a &=& -2HE_a + {\rm curl }B_a .
\end{eqnarray*}
For the comparison with the analysis in the present article, we take the frame to be orthonormal and use the time coordinate $\tau $ introduced in the previous section, which is now identified with the conformal time of the Friedmann background. Note that the curl terms include the contribution from the isotropic background spatial curvature. Assuming $B_a(\tau, x^i ) = B_k (\tau ) Y^k _a(x_i) $ where a vectorial eigenfunction of the Laplace-Beltrami operator $Y^k _a $ satisfies
\begin{equation*}
(\Delta + \frac{k^2}{l^2} )Y^k _a = {\rm div}Y^k = \dot{Y}^k _a = 0,
\end{equation*}
the solutions are given by
\begin{equation}
B_k (\tau ) = \frac{\alpha }{l^2 }e^{\sqrt{2-k^2}\tau } +\frac{\beta }{l^2}e^{-\sqrt{2-k^2}\tau } \label{eq:mag}
\end{equation}
where $\alpha $ and $\beta $ are integration constants. The magnetic field follows the standard adiabatic decay law $B_a \propto l^{-2}$ for short-wavelength modes $k^2 \geq 2$, while the decay rate is reduced on large scales where the wave-numbers are smaller than the threshold $k^2 =2$ \cite{Tsagas}. For the vector perturbation, the magnetic field is always accompanied by electric field, which must be generated through the curl term. Substituting the magnetic solution (\ref{eq:mag}) into the Maxwell's equations and using the vector identity on hyperbolic space
\begin{equation*}
{\rm curl \ curl} = -\left( \Delta +\frac{2}{l^2}\right) + {\rm grad \ div } ,
\end{equation*}
we obtain
\begin{equation*}
E_a (\tau ,x_i ) = \frac{1}{\sqrt{2-k^2}}\left( \frac{\alpha }{l }e^{\sqrt{2-k^2}\tau }-\frac{\beta }{l }e^{-\sqrt{2-k^2}\tau }\right) {\rm curl }\ Y^k _a .
\end{equation*}
The apparent $l$-dependence of the electric field is misleading since the curl operator here includes another factor of $l^{-1}$. The physical amplitude of the electric field in the orthonormal frame behaves as $\propto \frac{e^{\pm \sqrt{2-k^2}\tau }}{l^2}$ as it should because of the symmetry between source-free electric and magnetic field. In terms of the energy density $\rho _{em} = \frac{1}{2}(B_a B^a + E_a E^a )$, the result is summarised as follows: There are two distinct behaviours above and below the threshold $k^2 =2$. For $k^2 \geq 2$, we just have the adiabatic decaying law $\rho _{em} \propto l^{-4}$. But for longer wavelengths, the decay rate is modified to $\rho _{em} \propto l^{-4} e^{\pm 2\sqrt{2-k^2 }\tau }$ each of which corresponds to growing ($\beta = 0$) and decaying ($\alpha = 0$) mode. If the background is Milne, then $l \propto e^{\tau }$ and we have 
\begin{equation}
\rho _{em} \propto l^{-4\pm 2\sqrt{2-k^2}}. \label{eq:decay}
\end{equation}

It later turns out to be helpful to compute the Poynting vectors associated with the solutions derived above. They are given by
\begin{equation*}
(E \times B)_a = \frac{-1}{\sqrt{2-k^2}}\left( \frac{\alpha ^2}{l^3} e^{2\sqrt{2-k^2}\eta } - \frac{\beta ^2}{l^3}e^{-2\sqrt{2-k^2}\eta } \right) (Y^k \times {\rm curl }\ Y^k )_a.
\end{equation*}
This means that for each given $Y^k_a$, the growing and decaying modes have exactly antiparallel Poynting vectors, which is reminiscent of the LRS models where the system shows different asymptotic behaviours according to the handedness of the field. 

Let us now consider the electromagnetic Bianchi models near isotropy. For the limit of isotropy, we assume $| \Sigma _{\pm , \times } | \ll 1$ in the dynamical system (\ref{eq:ein}) - (\ref{eq:pi}). This implies $N_{- ,\times}, \Pi _{\pm , \times }$ and $\Xi$ are all first order as well. The linearized electromagnetic sector obeys
\begin{eqnarray}
\Pi _{+}^{\prime } &=& 2(q-1)\Pi _{+} + 2A\Xi  \label{eq:P1}\\
\Xi ^{\prime } &=& 2(q-1)\Xi + 2A\Pi _{+} \label{eq:P2} \\
\Pi ^{\prime }_{-} &=& 2(q-1) \Pi _{-} \label{eq:P3} \\
\Pi ^{\prime }_{\times } &=& 2(q-1) \Pi _{\times } . \label{eq:P4}
\end{eqnarray}
First of all, we note that the evolution equations for $\Pi _{+}$ and $\Xi $ are closed among themselves and do not receive any contribution from $\Pi _{-}$ or $\Pi _{\times }$, which is exactly the feature of null electromagnetic field appearing in the LRS models. That is, any electromagnetic perturbation that must have nonzero energy density $\Pi _{+}$ inevitably excites the energy flux perturbation $\Xi $ as well, while $\Pi _{-}$ and $\Pi _{\times }$ are independent of $\Pi _{+}$ as the quadratic constraint (\ref{eq:pi}) is degenerate at linear order. The latter two modes are also subject to the remaining gauge freedom. Therefore their physical significance is questionable. Fortunately, we can consider the LRS modes $\Pi _{+}$-$\Xi $ separately as far as the isotropic limit is concerned. Moreover, the evolution of $\Pi _{-,\times }$ is trivial anyway since the equations (\ref{eq:P3}) and (\ref{eq:P4}) solve regardless of the background as
\begin{equation*}
\Pi _{-, \times } \propto H^{-2} l^{-4}
\end{equation*}
where $H$ is the background Hubble expansion rate. This is just the adiabatic decaying law of radiation. From now on, therefore, we focus our attention on the LRS models for which we have some exact solutions and full dynamical picture.

For $\frac{2}{3} <\gamma \leq 2$, the generic behaviour of the LRS models is to start from an anisotropic singularity dominated by extrinsic curvature, possibly becoming almost isotropic for an intermediate time interval and attracted towards electro-vacuum, which is driven by the intrinsic curvature synchronised with the null Maxwell field in the future. The handedness of the electromagnetic wavelet, namely the direction of the Poynting vector, is important in deciding the asymptotic states. In left-handed cases, both past and future dynamics receive significant contributions from the Maxwell field appearing as plane-wave space-times while right-handed fields only affect the intermediate evolution. 

In the cosmological context, we are interested in the future asymptotic behaviour of the electromagnetic energy density $\rho _{em}$ in the isotropic limit. That is, when an orbit is attracted towards Milne point $M(\tilde{h})$. This is also the situation where the magnitude of the shear variable $|\Sigma _{+}|$ takes its minimum in most cases: i.e. all the right-handed models and left-handed models with $\gamma \leq  \frac{4}{3}$ (FIG. \ref{fig:rad}). 
\begin{figure}
\includegraphics[scale=0.20]{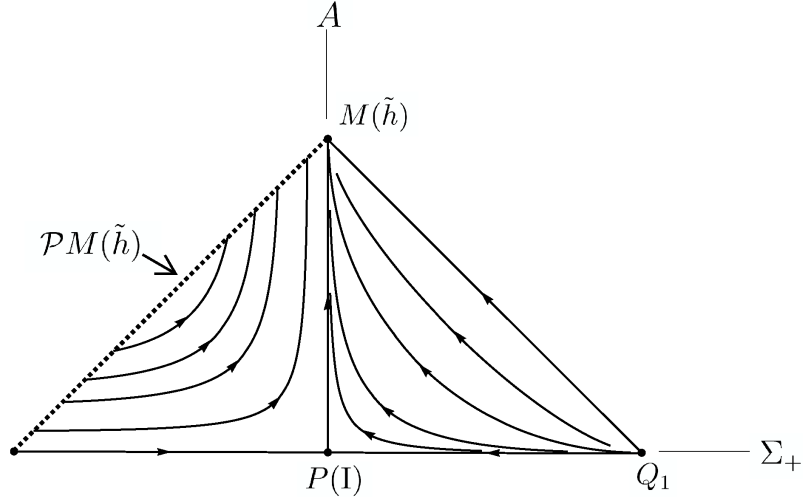}
\caption{\label{fig:rad}Phase portrait of type V and VII$_h$ LRS models for radiation ($\gamma = \frac{4}{3}$). All the orbits achieve their most isotropic state at late time.}
\end{figure}
We can solve the perturbed LRS Maxwell's equations (\ref{eq:P1}) and (\ref{eq:P2}) with $q=0, A=1$ and obtain the two independent solutions
\begin{equation*}
\Pi _{+} \propto \frac{\rho _{em}}{H^2 } \propto l^{-4} + {\rm const.}.
\end{equation*}
They translate into the decaying mode $\rho _{em} \propto l^{-6}$ and the growing mode $\rho _{em} \propto l^{-2}$. Looking at the exact solutions for $\gamma = \frac{4}{3}$ and $\gamma =2$, we can easily see that the decaying (growing) mode represents the right-handed (left-handed) orbits. In fact, this is the generic feature of asymptotic behaviour of the LRS models in isotropic limit. 
Near isotropy, the decaying rate of electromagnetic energy density is decided by the handedness of the field. Recalling the perturbative result for open FLRW (\ref{eq:decay}), the long-wavelength inhomogeneous perturbation of $k^2 =1$ shares the decay rates and the relation to the handedness with the LRS modes studied here. This appears to imply that the super-adiabitc mode of $k^2 =1$ in open FLRW generalizes to the left-handed LRS orbit attracted towards the close-to-isotropy plane-wave solutions.

We can see another isotropic asymptotic behaviour for $\gamma = \frac{2}{3}$ (FIG. \ref{fig:soft}).  The future-asymptotic state of the Bianchi model is $\mathcal{F}(\tilde{h})$ and therefore we should compare it with the perturbation around the one-parameter family of open FLRW backgrounds $\mathcal{F}(\tilde{h})$ instead of the Milne universe. 
\begin{figure}
\includegraphics[scale=0.20]{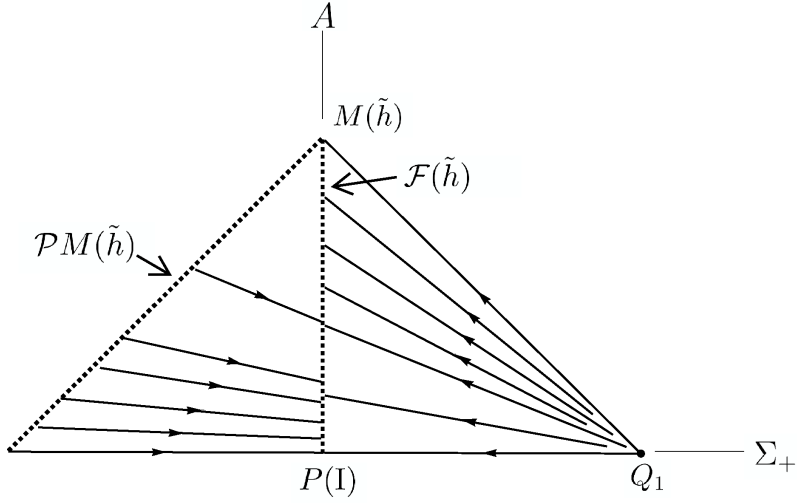}
\caption{\label{fig:soft}Phase portrait of type V and VII$_h$ LRS models for $\gamma = \frac{2}{3}$. The orbits are attracted towards the line of equilibrium points $\mathcal{F}(\tilde{h})$ indicated by the vertical dotted lines.}
\end{figure}
Looking at the LRS exact solutions, we see that the electromagnetic energy density goes as $l^{-4-\frac{2}{r}}$ for right-handed solutions and $l^{-4+\frac{2}{r}}$ for left-handed ones with $r>1$. Here $r$ parametrizes different members within the family of $\mathcal{F}(\tilde{h})$ as well as different obits. On the other hand, the inhomogeneous perturbation around the FLRW backgrounds $\mathcal{F}(\tilde{h})$, labeled by the same parameter $r$, gives the growing mode
\begin{equation*}
\rho _{em} \propto \frac{e^{2\sqrt{2-k^2}\tau }}{l^4} \propto l^{-4 + \frac{2\sqrt{2-k^2}}{r}}
\end{equation*}
and the corresponding decaying mode is
\begin{equation*}
\rho _{em} \propto \frac{e^{-2\sqrt{2-k^2}\tau }}{l^4} \propto l^{-4-\frac{2\sqrt{2-k^2}}{r}}.
\end{equation*}
Again, we can identify the $k^2=1$ growing mode with left-handed asymptotic behaviour and the decaying mode with the right-handed one. The presence of the isotropic perfect fluid smears out the effect of magneto-curvature coupling. 

It should be mentioned that these arguments are far from conclusive since the orbits that are future-asymptotic to Milne are of measure zero in the general electromagnetic class B models. On the other hand, there is no LRS left-handed orbit which approaches Milne and we can only have approximately isotropic final states there while all the right-handed orbits approach $M(\tilde{h})$. In this sense, the above comparison is asymmetric between left and right, in turn growing and decaying modes. However, it simply indicates that the nonlinear interaction can lead to a quite different behaviour from linear perturbations and the dominant effect their seems to be the electromagnetic plane-wave mode. It is reasonable that this nonlinear plane-wave corresponds to the growing mode of the perturbation while the decaying mode can only be seen in the exceptional orbits which are asymptotic to Milne.

\section{Concluding Remarks}
We have investigated a class of spatially homogeneous Einstein-Maxwell space-times and described possible asymptotic behaviours by a dynamical systems analysis. The results are analogous to pure-gravitataional models, with the electromagnetic field acting as a kind of bridge between extrinsic and intrinsic-curvature dominated regimes. The extended electromagnetic plane-waves are stable attractors of the system. In the LRS models, we derived more detailed features of the dynamics by looking at some exact solutions. The handedness of the null field plays a crucial role in the magneto-curvature coupling in the restricted class of models. These LRS null Maxwell modes appear to generalise the electromagnetic vector perturbations around open FLRW with the wave-vector $k=1$. 

The Bianchi models provide another example of the close relation between gravity and electromagnetism. The dynamics here are surprisingly simple considering the dimensionality of the system. It is interesting to note that Maxwell fields can dominate over perfect fluids, for example dust at late times in a long-wavelength limit, even though the simple adiabatic decaying law of electromagnetic energy density suggests otherwise.

As to the connection to the perturbations, this analysis shows that we might not necessarily be able to ignore the vector mode with the wave-number smaller than $\sqrt{2}$ in open FLRW models as used to be done (e.g. Ref. \onlinecite{Goode}). It was already argued for scalar modes that we should take into account the supercurvature mode $k<1$ when we are concerned with a random distribution of perturbations over the sky, even though any causal perturbation could be described by the subcurvature modes which span the basis of square-integrable functions over hyperbolic space \cite{Lyth}. In the case of the vector mode, it is already not obvious what square-integrable means and it is not clear which wavelengths we should include in what situation. The results here are of interest regarding this issue because homogeneous models seem to correspond to neither 
 $k = \sqrt{2}$ nor $k=0$, but $k=1$.

\begin{acknowledgements}
The author would like to thank Professor John D. Barrow for initiating this work and for his help on the course of the analysis. The author would also like to thank Dr. Stephen Siklos and Dr. Anthony Challinor for their encouragement and Professor Christos Tsagas for useful discussions. The author is supported by the Cambridge Overseas Trust.
\end{acknowledgements}

\appendix*

\section{Class B metrics in automorphism variables}
For reference, we give a set of metric variables, proposed by Siklos \cite{Siklos} and their relation to the orthonomal frame. First of all, we introduce canonical basis vectors $\{ \bold{E}_{\alpha } \} $ defined by
\begin{eqnarray*}
\bold{E}_1 = \partial _x  \ \ \ \ \ && \bold{E}_2 = e^x \left( \cos \epsilon kx \partial _y + \epsilon \sin \epsilon kx \partial _z \right) \\ 
&& \bold{E}_3 = e^x \left( -\epsilon ^{-1}\sin \epsilon kx \partial _y + \cos \epsilon kx \partial _z \right) 
\end{eqnarray*}
Here the parameter $\epsilon $ is taken to be $1$ for VII$_h$, $0$ for V and $i$ for VI$_h$. In this coordinate system, the metric tensor can be written in the following form:
\begin{equation*}
g^{\alpha \beta } = \tilde{g}^{\alpha ^{\prime }\beta ^{\prime }}(A\Phi B)_{\alpha ^{\prime }}^{\ \alpha }(A\Phi B)_{\beta ^{\prime }}^{\ \beta },
\end{equation*}
where
\begin{equation*}
\tilde{g} =  \left(  \begin{array}{ccc}
    a^2 & 0 & 0 \\ 
    0 & e^{2\mu } & 0 \\ 
    0 & 0 & e^{-2\mu } \\ 
  \end{array} \right) , \ \ \ \ \ \Phi  =  \left( \begin{array}{ccc}
    1 & 0 & 0 \\ 
    0 & \cos \epsilon \phi & \epsilon \sin \epsilon \phi \\ 
    0 & \epsilon ^{-1} \sin \epsilon \phi & \cos \epsilon \phi \\ 
  \end{array} \right) ,
\end{equation*}
\begin{equation*}
A =   \left( \begin{array}{ccc}
    1 & 0 & 0 \\ 
    0 & e^{-\lambda } & 0 \\ 
    0 & 0 & e^{-\lambda } \\ 
  \end{array} \right) , \ \ \ \ B =  \left( \begin{array}{ccc}
    1 & b_2 & b_3 \\ 
    0 & 1 & 0 \\ 
    0 & 0 & 1 \\ 
  \end{array} \right).
  \end{equation*}
$\Phi, A$ and $B$ are called automorhpism transformations since if $\{ \bold{E}_{\alpha } \} _{\alpha = 1,2,3}$ satisfy the class B structure equations , $\bold{X}_{\alpha ^{\prime }} \equiv (A \Phi  B)_{\alpha ^{\prime }}^{\ \alpha } \bold{E}_{\alpha } $ also satisfy the same commutation relations.

By construction, 
\begin{equation*}
\bold{e}_1 \equiv a \bold{X}_1 , \ \ \  \bold{e}_2 \equiv e^{\mu } \bold{X}_2 , \ \ \ \bold{e}_3 \equiv e^{-\mu }\bold{X}_3 
\end{equation*}
are orthonormal. Introducing reduced variables
\begin{eqnarray*}
B_2 \equiv  -\frac{1}{2}e^{\lambda -\mu } \left( \dot{b}_2 \cos \epsilon \phi + \dot{b}_3 \epsilon \sin \epsilon \phi \right) , \ \ \ B_3 \equiv \frac{1}{2}e^{\lambda +\mu } \left( \dot{b}_2 \epsilon ^{-1} \sin \epsilon \phi - \dot{b}_3 \cos \epsilon \phi \right),
\end{eqnarray*}
the Ricci tensor in the orthonormal frame is given as follows:
\begin{eqnarray*}
 R_{00} &=& \frac{\ddot{a}}{a} -2\frac{\dot{a}^2}{a^2} -2a^2 (B_2^2 +B_3^2 ) -2\ddot{\lambda }-2\dot{\lambda }^2 -2\dot{\mu }^2 -\dot{\phi }^2(\cosh 4\mu -\epsilon ^2 ), \\
 R_{01} &=& 2\dot{a}+2a \dot{\lambda }-\epsilon ^2 ka\dot{\phi }(\cosh 4\mu -\epsilon ^2 ) , \\
 R_{02} &=& -3a^2 B_2 - \epsilon ^2 ka^2 B_3 e^{2\mu } ,  \ \ \ R_{03} \  =  \ -3a^2 B_3 + ka^2 B_2 e^{-2\mu } , \\
 R_{11} &=& -\frac{\ddot{a}}{a}+2 \frac{\dot{a}^2}{a^2} +2a^2(B_2^2 +B_3^2)-2a^2 -2\frac{\dot{a}}{a}\dot{\lambda }-k^2a^2(\cosh 4\mu -\epsilon ^2 ) , \\
 R_{12} &=& \dot{(aB_2)}+aB_2 (3\dot{\lambda }-\dot{\mu })-\epsilon ^2 aB_3 \dot{\phi }e^{2\mu } \ \ \ R_{13} \ = \  \dot{(aB_3)} +aB_3 (3\dot{\lambda }+\dot{\mu }) +aB_2 \dot{\phi }e^{-2\mu } , \\
 R_{22} &=& \ddot{\lambda }-\ddot{\mu }+(\dot{\lambda }-\dot{\mu })\left( 2\dot{\lambda }-\frac{\dot{a}}{a} \right) -2a^2 -2a^2 B_2 ^2+2(\dot{\phi }^2 -k^2 a^2 )\cosh 2\mu \sinh 2\mu , \\
 R_{33} &=& \ddot{\lambda }+\ddot{\mu }+(\dot{\lambda }+\dot{\mu }) \left(2\dot{\lambda } -\frac{\dot{a}}{a} \right) -2a^2 -2a^2 B_3^2 -2(\dot{\phi }^2 -k^2 a^2 )\cosh 2\mu \sinh 2\mu , \\
 R_{23} &=& \left( -\ddot{\phi }+\frac{\dot{a}}{a}\dot{\phi } \right) \frac{e^{2\mu }-\epsilon ^2 e^{-2\mu }}{2} -2a^2 B_2 B_3 - \epsilon ^2 (ka^2 -\dot{\lambda }\dot{\phi })(e^{2\mu }-\epsilon ^2 e^{-2\mu }) -2\dot{\mu }\dot{\phi }(e^{2\mu }+\epsilon ^2 e^{-2\mu }) .
\end{eqnarray*}
The LRS specialisation in section 5 is obtained by setting $\phi = \mu =0$.


\begin{thebibliography}{99}
\bibitem{Kronberg} P. P. Kronberg, J. J. Perry and E. L. H. Zukowski, Astrophys. J. {\bf 387}, 528 (1992).
\bibitem{Wolfe} A. M. Wolfe, K. M. Lanzetta and A. L. Oren, Astrophys. J. {\bf 388}, 17 (1992).
\bibitem{Silk} J. D. Barrow, P. G. Ferreira and J. Silk, Phys. Rev. Lett. {\bf 78}, 3610 (1997).
\bibitem{Barrow} J. D. Barrow, Phys. Rev. D {\bf 55}, 7451 (1997).
\bibitem{Maartens} C. A. Clarkson, A. A. Coley, R. Maartens and C. G. Tsagas, Class. Quantum Grav. {\bf 20}, 1519 (2003).
\bibitem{Kernan} P. J. Kernan, G. D. Starkman and T. Vachaspati, Phys. Rev. D {\bf 54}, 7207 (1996).
\bibitem{Giovannini} M. Giovannini and M. E. Shaposhnikov, Phys. Rev. Letter {\bf 80}, 22 (1998).
\bibitem{Neronov} A. Neronov and I. Vovk, Sci. {\bf 328}, 73 (2010).
\bibitem{Ando} S. Ando and A. Kusenko, Astrophys. J. {\bf 722}, L39 (2010).
\bibitem{Tsagas} J. D. Barrow and C. G. Tsagas, Phys. Rev. D {\bf 77}, 4 (2008);
 J. D. Barrow and C. G. Tsagas, Mon. Not. R. Astr. Soc. {\bf 414}, 512 (2011).
\bibitem{Hughston} L. P. Hughston and K. C. Jacobs, Astrophys. J. {\bf 160}, 147 (1970).
\bibitem{Collins} C. B. Collins, Commun. Math. Phys. {\bf 23}, 137 (1971).
\bibitem{LeBlancVI} V. G. LeBlanc, D. Kerr and J. Wainwright, Class. Quantum Grav. {\bf 12}, 513 (1995).
\bibitem{LeBlancI} V. G. LeBlanc, Class. Quantum Grav. {\bf 14}, 2281 (1997).
\bibitem{LeBlancII} V. G. LeBlanc, Class. Quantum Grav. {\bf 15}, 1607 (1998).
\bibitem{ClassB} C. G. Hewitt and J. Wainwright, Class. Quantum Grav. {\bf 10}, 99 (1993).
\bibitem{TiltedB} A. A. Coley and S. Hervik, Class. Quantum Grav. {\bf 22}, 579 (2005).
\bibitem{Ellis} J. Wainwright and G. F. R. Ellis, \textit{Dynamical Systems in Cosmology} (Cambridge University Press, Cambridge 1997).
\bibitem{MacCallum} G. F. R. Ellis and M. A. H. MacCallum, Commun. Math. Phys. {\bf 12}, 108 (1969).
\bibitem{MTW} C. W. Misner, K. S. Thorne and J. A. Wheeler, \textit{Gravitation} (W.H. Freeman and Co., San Francisco, 1973).
\bibitem{Wainwright} J. Wainwright and L. Hsu, Class. Quantum Grav. {\bf 6}, 1409 (1989).
\bibitem{LRS} G. F. R. Ellis, J. Math. Phys. {\bf 8}, 1171 (1967).
\bibitem{Harvey} A. Harvey, D. Tsoubelis and B. Wilsker, Phys. Rev. D {\bf 20}, 2077 (1979).
\bibitem{Araujo} M. E. Araujo and J. E. F. Skea, Class. Quantum Grav. {\bf 5}, 1073 (1998).
\bibitem{Hervik} S. Hervik, Class. Quantum Grav. {\bf 20}, 4315 (2003).
\bibitem{Aulbach} B. Aulbach, \textit{Continuous and Discrete Dynamics near Manifolds of Equilibria} (Springer-Verlag, Berlin, 1984).
\bibitem{Siklos}  S. T. C. Siklos, Phys. Lett. A {\bf 76}, 19 (1980).
\bibitem{Ftaclas} C. Ftaclas and J. M. Cohen, Phys. Rev. D {\bf 18}, 4373 (1978).
\bibitem{Roy} S. R. Roy and J. P. Singh, Astrophys. Space Sci. {\bf 96}, 303 (1983).
\bibitem{BMT} J. D. Barrow, R. Maartens and C. G. Tsagas, Phys. Rep. {\bf 449}, 131 (2007).
\bibitem{Goode} S. W. Goode, Phys. Rev. D {\bf 39}, 2882 (1989).
\bibitem{Lyth} D. H. Lyth and A. Woszczyna, Phys. Rev. D {\bf 52}, 3338 (1995).

\end{thebibliography}
\end{document}